\newtheorem{theorem}{Theorem}
\newcommand{\R}{\mathbb R}
\newcommand{\Q}{\mathbb Q}
\newcommand{\Z}{\mathbb Z}
\newcommand{\ee}{\varepsilon}
\newcommand{\SSS}{\mathcal S}
\begin{document}
\sloppypar

\title{Improved Approximation Algorithms for the Min-Max Selecting Items Problem}
\author{Benjamin Doerr\\
Max Planck Institute for Computer Science,\\
Campus E1 4\\
66123 Saarbr\"ucken, Germany}

\newcommand\eps{\varepsilon}
\maketitle

\begin{abstract}
  We give a simple deterministic $O(\log K / \log\log K)$ approximation algorithm for the Min-Max Selecting Items problem, where $K$ is the number of scenarios. While our main goal is simplicity, this result also improves over the previous best approximation ratio of $O(\log K)$ due to Kasperski, Kurpisz, and Zieli\'nski (Information Processing Letters (2013)). Despite using the method of pessimistic estimators, the algorithm has a polynomial runtime also in the RAM model of computation. We also show that the LP formulation for this problem by Kasperski and Zieli\'nski (Annals of Operations Research (2009)), which is the basis for the previous work and ours, has an integrality gap of at least $\Omega(\log K / \log\log K)$. \\
  \emph{Key words: } Approximation algorithm; randomized rounding; derandomization; robust optimization.
\end{abstract}

\section{A Simple Approximation Algorithm}

In this short note, we first give a simple approximation algorithm for the Min-Max Selecting Items problem. In this problem, we are given $n$ items numbered from $1$ to $n$ and a set $\SSS$ of $K$ scenarios. A scenario $S \in \SSS$ is an assignment of nonnegative integral costs $c_{S,i}$ to each item $i \in [n] := \{1, \ldots, n\}$. The goal is to select a set $P$ of exactly $p$ items such that the maximum cost $c(P,S) := \sum_{i \in P} c_{S,i}$ of the selected items in any scenario is minimal. This problem belongs to the class of problems where solutions are sought which are robust to an event determined only after the optimization process. See~\cite{KouvelisY97} for more on robust optimization.

For the above defined Min-Max Selecting Items problem, a randomized $O(\log K)$ approximation algorithm was given in~\cite{KasperskiZ09}. A deterministic algorithm of same approximation ratio was given in~\cite{KasperskiKZ13}. We shall now give an algorithm considerably simpler than the two previous ones that achieves an approximation ratio of $O(\log K / \log\log K)$.

Like both previous works, we regard the following class of linear programs. For any $C \ge 0$, let $I_C := \{i \in [n] \mid \forall S \in \SSS : c_{S,i} \le C\}$ denote the set of items having cost at most $C$ in all scenarios. Consider the linear program
\begin{align*}
LP_C: &\sum_{i \in I_C} x_i = p\\
			&\sum_{i \in I_C} c_{S,i} x_i \le C, S \in \SSS\\
			&x_i \in [0,1], i \in I_C.
\end{align*}

Via binary search over the reasonable values for $C$ and solving $LP_C$, we find the smallest $C$ such that $LP_C$ has a solution. Naturally, this $C$ is a lower bound for the optimum of the Min-Max Selecting Items problem. 

Our aim in the following is to transform the fractional solution $(x_i)_{i \in I_C}$ into a solution for the selecting items problem that has cost $c(P,S) = O(C \log K / \log\log K)$ for all $S \in \SSS$.

The solution we will construct shall only take items from $I_C$, hence for convenience, we may simply assume $I_C = [n]$. By scaling the costs, we may assume that $C = 1$, and consequently (since $I_C = [n]$), that all costs are rational numbers not exceeding $1$. 

We now use dependent randomized rounding and its derandomization to find the desired solution. In Raghavan's~\cite{Raghavan88} classic randomized rounding, we would round each $x_i$ independently to some 0,1 valued $y_i$ such that $\Pr(y_i = 1) = x_i$. Chernoff bounds would immediately give that with reasonable probability, we have $\sum_{i \in [n]} c_{S,i} y_i \le O(\log K / \log\log K)$ for all $S \in \SSS$. The problem with this approach is that the cardinality constraint $\sum_{i \in [n]} y_i = p$ is unlikely to be satisfied. Our feeling is that overcoming this difficulty is the major reason why the previous solutions to the Min-Max Selecting Items problem are slightly technical. 

Fortunately, there is an easy solution. As first shown in 2001 by Srinivasan~\cite{Srinivasan01}, one can do randomized rounding both satisfying cardinality constraints and satisfying the same Chernoff bounds that are known for independent randomized rounding. This idea has found numerous applications in the last ten years. On the technical side, an alternative solution for this rounding problem that led to the first derandomization of such roundings was developed in~\cite{ichstacs05}. That Srinivasan's approach can be derandomized, in fact by simply reusing Raghavan's pessimistic estimators, was shown in~\cite[Theorem~3.1]{DoerrW09}. We state a slightly different formulation of this result here, which has an identical proof.
\begin{theorem}\label{trr}
  Let $A \in [0,1]^{m \times n}$ and $x \in [0,1]^n$. Assume that $\sum_{i \in [n]} x_i \in \Z$. Then in time $O(mn)$, a $y \in \{0,1\}^n$ can be computed such that $\sum_{i \in [n]} y_i = \sum_{i \in [n]} x_i$ and such that the rounding errors $|(Ay)_r - (Ax)_r|$ in each row $r$ satisfy the same bounds as in Raghavan's derandomization of independent randomized rounding. This result assumes that one can compute rational powers of integers with perfect precision in constant time. In the RAM model of computation, the result only holds for $A \in \{0,1\}^{m \times n}$.
\end{theorem}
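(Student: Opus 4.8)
The plan is to transform $x$ into $y$ by a \emph{dependent} randomized rounding that preserves the coordinate sum exactly, to observe that the pessimistic estimators Raghavan~\cite{Raghavan88} uses for independent rounding remain valid supermartingales for this process, and then to derandomize by the method of conditional expectations. Only the supermartingale step is genuinely new: the estimators, Raghavan's choice of their parameters, and the resulting per-row deviation bounds are taken over verbatim.

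\emph{The rounding step.} We iteratively modify a vector $z\in[0,1]^n$, starting from $z=x$. While $z$ has at least two fractional coordinates $z_i,z_j\in(0,1)$, let $\alpha,\beta>0$ be maximal with $(z_i-\alpha,z_j+\alpha),(z_i+\beta,z_j-\beta)\in[0,1]^2$, and replace $(z_i,z_j)$ by $(z_i-\alpha,z_j+\alpha)$ with probability $\beta/(\alpha+\beta)$ and by $(z_i+\beta,z_j-\beta)$ with probability $\alpha/(\alpha+\beta)$. This step leaves $z_i+z_j$, hence $\sum_\ell z_\ell$, unchanged; leaves each of $z_i,z_j$ unchanged in expectation; and makes at least one of the two coordinates integral, so the number of fractional coordinates drops by at least one. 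Since $\sum_\ell z_\ell\in\Z$, a lone fractional coordinate can never occur, so after at most $n-1$ steps $z$ is integral; set $y:=z$. Then $\sum_\ell y_\ell=\sum_\ell x_\ell$, and $\mathbb{E}[y]=x$, so $\mathbb{E}[(Ay)_r]=(Ax)_r$ for every row $r$.

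\emph{The estimators and the supermartingale property.} Fix, exactly as in Raghavan's analysis, a rational $\beta_r>0$ and a deviation $\delta_r$ for the upper tail of row $r$ (and a symmetric choice for the lower tail); write $\mu_r:=(Ax)_r$. With $g_{r\ell}(t):=1+\bigl((1+\beta_r)^{A_{r\ell}}-1\bigr)t$, which equals $(1+\beta_r)^{A_{r\ell}t}$ whenever $t\in\{0,1\}$ (so already-rounded coordinates need no special treatment), the upper-tail estimator at the current vector $z$ is $\Phi_r^+(z):=(1+\beta_r)^{-(\mu_r+\delta_r)}\prod_{\ell}g_{r\ell}(z_\ell)$, and we put $\Phi(z):=\sum_r\bigl(\Phi_r^+(z)+\Phi_r^-(z)\bigr)$. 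From Raghavan's computation $\Phi(x)<1$, and on any integral vector $y$ for which some error $|(Ay)_r-\mu_r|$ exceeds its target one has $\Phi(y)\ge 1$, since every $g_{r\ell}$-value is positive and $\prod_\ell g_{r\ell}(y_\ell)=(1+\beta_r)^{(Ay)_r}$. The only new point is that a rounding step does not increase $\mathbb{E}[\Phi]$. Fix a row and write $a:=(1+\beta_r)^{A_{ri}}-1\ge 0$, $b:=(1+\beta_r)^{A_{rj}}-1\ge 0$, $s:=z_i+z_j$, and let $(z_i',z_j')$ be the random new pair. Since $\mathbb{E}[z_i']=z_i$ and $\mathbb{E}[z_j']=z_j$,
\[\mathbb{E}[g_{ri}(z_i')g_{rj}(z_j')]=1+az_i+bz_j+ab\,\mathbb{E}[z_i'z_j'],\]
so, as $g_{ri}(z_i)g_{rj}(z_j)>0$ and only the two factors at $i,j$ change, $\mathbb{E}[\Phi_r^+(z')]\le\Phi_r^+(z)$ will follow once $\mathbb{E}[z_i'z_j']\le z_iz_j$; but $z_i'z_j'=z_i'(s-z_i')$ is concave in $z_i'$, so by Jensen $\mathbb{E}[z_i'(s-z_i')]\le z_i(s-z_i)=z_iz_j$. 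For the lower-tail factors the analogous coefficients are nonpositive, hence their product is again nonnegative and the same inequality applies. Summing over rows and both tails, $\mathbb{E}[\Phi]$ does not increase.

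\emph{Derandomization, running time, and the RAM caveat.} A rounding step offers exactly two alternatives whose $\Phi$-values average, under the probabilities above, to at most the current $\Phi(z)$; picking an alternative that does not increase $\Phi$ at every step keeps $\Phi$ below $1$ throughout, and since $\Phi(y)\ge 1$ would force some rounding error to exceed its target, the final $y$ satisfies Raghavan's bound in every row. For the running time, maintain the $2m$ scalars $\Phi_r^\pm(z)$; a step alters only the factors $g_{ri},g_{rj}$ inside each, so each scalar, and hence the comparison that chooses the better alternative, is updated in $O(1)$ time, giving $O(m)$ per step and $O(mn)$ over the at most $n-1$ steps. When $A\in\{0,1\}^{m\times n}$ one may take every $\beta_r$ rational and every threshold $\mu_r+\delta_r$ to be an integer --- legitimate because $(Ay)_r\in\Z$, so ``$(Ay)_r>\mu_r+\delta_r$'' is unchanged by replacing the threshold by $\lfloor\mu_r+\delta_r\rfloor$ --- whence each $\Phi_r^\pm(z)$ is a rational of controlled bit length and all comparisons are exact in the RAM model; for general $A\in[0,1]^{m\times n}$ the factors $(1+\beta_r)^{A_{r\ell}}$ are typically irrational, which is exactly why exact computation of rational powers of integers must then be assumed. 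The main obstacle is the supermartingale check for the product-form estimators, which reduces to the one-line concavity inequality $\mathbb{E}[z_i'z_j']\le z_iz_j$; everything else is Raghavan's analysis reused unchanged.
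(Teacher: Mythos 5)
Your proof is correct and follows essentially the route the paper relies on: it is the argument of~\cite{DoerrW09} (which the paper cites in place of a proof), namely Srinivasan-style pairwise dependent rounding under the cardinality constraint, the observation that Raghavan's product-form pessimistic estimators remain non-increasing in expectation under each pair step (your concavity inequality $\mathbb{E}[z_i'z_j']\le z_iz_j$ is exactly the needed negative-correlation fact), and greedy derandomization in $O(mn)$ time with the same Real-RAM caveat. No substantive gap; only the remark about replacing thresholds by integers in the $\{0,1\}$ case is stated a bit loosely, but that concerns a detail the theorem anyway delegates to Raghavan's Section~2.2.
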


We apply this result to our solution $x$ of $LP_C$ and the matrix $A := (C_{S,i})$. By adding dummy variables, we may assume that $s_r := (Ax)_r = 1$ for all rows $r$. Then the theorem above together with Theorem~3 and equation~(1.14) in~\cite{Raghavan88} gives a $y \in \{0,1\}^n$ with the rounding errors in all rows bounded by, in Raghavan's notation, $\Delta(1,1/(2K)) \le \frac{e \ln(2K)}{\ln(e \ln(2K))}$. Consequently, $P = \{i \in [n] \mid y_i = 1\}$ is a solution to the Min-Max Selecting Items problem with maximum cost at most $1 + \frac{e \ln(2K)}{\ln(e \ln(2K))} = O(\log K / \log\log K)$.

\section{Making the Algorithm Work in the RAM Model of Computation}

One known restriction of Raghavan's derandomization (described in detail in Section~2.2 of the original paper~\cite{Raghavan88}) is that it can be implemented in the RAM model of computation only if the coefficients of the constraint matrix (this is the matrix $A$ in Theorem~\ref{trr}) are in $\{0,1\}$. In all other cases, rational powers of integers have to be computed, so the derandomization can only be executed in the Real RAM model. 

Since having to assume the Real RAM model of computation for a purely combinatorial problem is undesirable (see also~\cite{SrivastavS96}), there has been some interest to make the method of pessimistic estimators also work in the RAM model. The first successful derandomization for the RAM model was given by Srivastav and Stangier~\cite{SrivastavS96}. As the 30-page paper indicates, their approach is technically quite involved. A second price to pay is an increased runtime of $O(mn^2 \log(mn))$ as opposed to the usual $O(mn)$ runtime for Raghavan's derandomization. 

An alternative approach was presented in~\cite{ichstacs06}, which solves many derandomization problems in time $O(mn \log n)$ by a reduction to Raghavan's solution for $\{0,1\}$ matrices. Since the derandomization result as formulated in~\cite{ichstacs06} does not give $O(\log K / \log\log K)$ approximations (in our notation), we quickly prove an alternative formulation that serves our needs.

\begin{theorem}
  Let $A \in ([0,1] \cap \Q)^{m \times n}$ and $x \in ([0,1] \cap \Q)^n$. Assume that $\sum_{i \in [n]} x_i \in \Z$. Then in time $O(mn \log n)$, a $y \in \{0,1\}^n$ can be computed such that (i) $\sum_{i \in [n]} y_i = \sum_{i \in [n]} x_i$ and (ii) for all $r \in [m]$, $(Ay)_r = O(\max\{1,(Ax)_r\} \log m / \log\log m)$.
\end{theorem}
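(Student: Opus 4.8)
The plan is to reduce the rational-matrix case to the $\{0,1\}$-matrix case handled by Raghavan's derandomization (Theorem~\ref{trr}), following the strategy of~\cite{ichstacs06}. The idea is a bit-by-bit expansion of the rational entries: write each $a_{ri} \in [0,1] \cap \Q$ in binary, keeping only about $\log n + O(\log\log m)$ significant bits, and replace the single constraint row $r$ by a family of $\{0,1\}$-rows, one for each retained bit position, together with appropriate powers of two as multipliers outside the matrix. More precisely, for bit position $j \ge 1$ let $A^{(j)} \in \{0,1\}^{m \times n}$ record the $j$-th bit of each truncated entry, so that $A \approx \sum_j 2^{-j} A^{(j)}$ up to a truncation error of at most $n \cdot 2^{-t}$ per row when we keep $t$ bits; choosing $t = \lceil \log_2 n \rceil + O(1)$ makes this truncation error $O(1)$ per row, which is absorbed into the $O(\max\{1,(Ax)_r\})$ term. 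Stacking all the $A^{(j)}$ into one $\{0,1\}$-matrix $\tilde A$ of dimension $(mt) \times n$ and applying Theorem~\ref{trr} to $\tilde A$ and $x$ yields a $y \in \{0,1\}^n$ with $\sum_i y_i = \sum_i x_i$ and with every row error $|(\tilde A y)_{(r,j)} - (\tilde A x)_{(r,j)}|$ bounded by Raghavan's $\Delta$-quantity evaluated at the new number of rows $M := mt = O(m \log n)$.

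The key quantitative step is then to recombine the per-bit errors. We have
\begin{align*}
  (Ay)_r &\le \sum_{j=1}^{t} 2^{-j} (A^{(j)} y)_r + O(1)\\
  &\le \sum_{j=1}^{t} 2^{-j}\bigl((A^{(j)} x)_r + \Delta_r^{(j)}\bigr) + O(1)\\
  &\le (Ax)_r + O(1) + \sum_{j=1}^{t} 2^{-j} \Delta_r^{(j)},
\end{align*}
where $\Delta_r^{(j)}$ is the row error guaranteed for row $(r,j)$. Since $\sum_j 2^{-j} < 1$ and each $\Delta_r^{(j)}$ is at most Raghavan's bound for a problem with $M = O(m\log n)$ rows and row sum $s_{(r,j)} := (A^{(j)}x)_r$, the last sum is $O\bigl(\Delta(\max_j s_{(r,j)}, 1/(2M))\bigr)$. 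To push this through one needs the standard estimate $\Delta(s, 1/(2M)) = O\bigl(\max\{s, \log M / \log\log M\}\bigr)$ from~\cite{Raghavan88} (Theorem~3 and equation~(1.14)), together with the observations that $s_{(r,j)} = (A^{(j)}x)_r \le 2^j (Ax)_r + O(1) $ is not the right bound to use directly — rather one bounds each term $2^{-j} s_{(r,j)} \le (Ax)_r + O(2^{-j})$ so the deterministic part telescopes, and one uses $\log M / \log\log M = O(\log m / \log\log m)$ because $M = O(m\log n)$ and $n$ is polynomially bounded (so $\log n$ contributes only a constant factor inside the logarithm). Combining, $(Ay)_r = O(\max\{1,(Ax)_r\}) + O(\log m/\log\log m) = O(\max\{1,(Ax)_r\}\,\log m/\log\log m)$, which is (ii); assertion (i) is immediate from Theorem~\ref{trr}. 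The runtime is $O(Mn) = O(mn\log n)$ for the derandomization, plus $O(mn\log n)$ to compute the bit expansions, as claimed.

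The main obstacle I anticipate is controlling the number of retained bits $t$ so that simultaneously (a) the truncation error stays $O(\max\{1,(Ax)_r\})$ per row rather than $O(1)$ — note that when $(Ax)_r$ is large we can afford a proportionally larger truncation error, and when it is small, $t = O(\log n)$ bits genuinely suffice; and (b) the blow-up in the row count from $m$ to $O(m\log n)$ does not degrade the $\log m/\log\log m$ factor — this is fine exactly because the degradation is only in the doubly-logarithmic term. A secondary subtlety is that the entries $a_{ri}$ are arbitrary rationals, so their binary expansions need not terminate; one must argue that truncating to $t$ bits is legitimate in the RAM model, i.e. that comparing and shifting $t = O(\log n)$-bit numbers costs $O(1)$ per operation under the standard word-size assumption, which is precisely the regime in which~\cite{ichstacs06} operates. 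Once these bookkeeping points are settled, the recombination inequality above does the rest.
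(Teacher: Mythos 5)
Your bit-decomposition argument is essentially the first half of the paper's proof: the paper likewise writes $A$ (up to an entrywise error of $2^{-\ell}\le 1/n$, $\ell=\lceil\log_2 n\rceil$) as $\sum_{j}2^{-j}A^{(j)}$ with $A^{(j)}\in\{0,1\}^{m\times n}$, applies Theorem~\ref{trr} to the stacked $(m\ell)\times n$ matrix, and recombines the per-bit errors exactly as you do. Up to that point your proposal is sound.

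The gap is the step where you claim $\log M/\log\log M=O(\log m/\log\log m)$ for $M=O(m\log n)$ ``because $n$ is polynomially bounded.'' The theorem carries no such hypothesis, and the assumption is false in general: if $m$ is constant (or very small) and $n$ is huge, then $\log M/\log\log M=\Theta(\log\log n/\log\log\log n)$ is unbounded while $\log m/\log\log m$ is $O(1)$, so the stacked-matrix bound does not yield (ii). Your reassurance in obstacle (b) that ``the degradation is only in the doubly-logarithmic term'' is mistaken --- the row count $m\ell$ enters the leading logarithm as $\log m+\log\log n$. This regime also matters for the application, where $m=K$ can be far smaller than the number of items $n$, and the whole point is an $O(\log K/\log\log K)$ guarantee. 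The paper closes this case separately: if $m<\sqrt{\log n}$, it first uses elementary linear algebra to replace $x$ by an $x'\in[0,1]^n$ with $Ax'=Ax$, the same coordinate sum, and at most $m+1$ non-integral entries (repeatedly adding a suitable multiple of a vector $\ee$ in the kernel of $A$ and of the all-ones row, supported on $m+2$ fractional coordinates), and then exhaustively checks all $2^{m+1}=O(n)$ roundings of the remaining fractional coordinates, each in time $O(m\log n)$; a good rounding exists by Theorem~\ref{trr}, and the total time stays $O(mn\log n)$. Without this (or an equivalent device, such as always shrinking the fractional support to $O(m)$ before rounding), your argument proves the theorem only under the extra assumption $\log\log n=O(\log m)$, e.g.\ $m\ge\sqrt{\log n}$, not as stated.
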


\begin{proof}
  Let $\ell = \lceil\log_2 n\rceil$. In time $O(mn\log n)$, compute binary matrices $A^{(1)}, \ldots, A^{(\ell)} \in \{0,1\}^{m \times n}$ such that $\tilde A := \sum_{j \in [\ell]} 2^{-j} A^{(j)}$ and $A$ differ in each entry by at most $2^{-\ell} \le 1/n$. By the triangle inequality, $\|Ax - \tilde Ax\|_\infty \le 1$ for all $x \in [0,1]^n$.
  
 Applying Theorem~\ref{trr} to $x$ and the $(m\ell) \times n$ matrix obtained from all rows of $A^{(1)}, \ldots, A^{(\ell)}$, in time $O(m\ell n)$ in the RAM model we obtain a $y \in \{0,1\}^n$ such that (i) $\sum_{i \in [n]} y_i = \sum_{i \in [n]} x_i$ and (ii') for all $r \in [m]$ and $j \in [\ell]$, we have that $|(A^{(j)} y)_r - (A^{(j)} x)_r|$ satisfies the upper bound for rounding error  obtained by Raghavan's derandomization for an $(m\ell) \times n$ matrix. Note that by (1.13), (1.14) and Theorem~3 in~\cite{Raghavan88}, each of these rounding errors is $O(\max\{1,(A^{(j)}x)_r\} \log(m\ell) / \log\log(m\ell))$. Consequently,
\begin{align*}
  |(\tilde Ay)_r - (\tilde Ax)_r| 
    &= \bigg|\sum_{j \in [\ell]} 2^{-j} ((A^{(j)}y)_r - (A^{(j)}x)_r) \bigg|\\
    &\le \sum_{j \in [\ell]} 2^{-j} |(A^{(j)}y)_r - ((A^{(j)}x)_r| \\
    &\le O(\log(m\ell) / \log\log(m\ell)) \sum_{j \in [\ell]} 2^{-j} \max\{1,(A^{(j)}x)_r\}   \\
    &\le O(\log(m\ell) / \log\log(m\ell)) \sum_{j \in [\ell]} 2^{-j} (1 + (A^{(j)}x)_r)   \\
    &\le O(\log(m\ell) / \log\log(m\ell))(1 + (\tilde A x)_r),
\end{align*}
and thus $(Ay)_r \le 1 + (\tilde A y)_r \le 1 + (\tilde Ax)_r + |(\tilde Ay)_r - (\tilde Ax)_r| = O(\max\{1,(\tilde Ax)_r\} \log(m\ell) / \log\log(m\ell)) = O(\max\{1,(Ax)_r\} \log(m\ell) / \log\log(m\ell))$. This shows the theorem for, e.g., $m \ge \sqrt{\log n}$.

  If $m < \sqrt{\log n}$, we may use elementary linear
algebra as follows to transform $x$ into a vector $x' \in ([0,1] \cap \Q)^n$ such that $Ax = Ax'$, the cardinality constraint $\sum_{i \in [n]} x'_i = \sum_{i \in [n]} x_i$ is satisfied, and at most
$m+1$ entries of $x'$ are not $0$ or $1$: Let $J \subseteq [n]$, $|J| = m+2$ such that $x_j \notin \{0,1\}$ for all $j \in J$. Then by essentially solving an $m \times (m+2)$ system of linear equalities (in time $O(m^3)$), we obtain an $\ee \in \R^n \setminus \{0\}$ such that $\ee_{|[n]\setminus J} = 0$, $A\ee = 0$ and $\sum_{i \in [n]} \ee_i = 0$. Hence adding a suitable multiple of $\ee$ to $x$ yields a $[0,1]$--vector $x'$ having fewer non-integral entries than $x$ and still satisfying $Ax = Ax'$ and the cardinality constraint. Repeating this $O(n)$ times, we end up with the desired  $x'$. Computing it took $O(nm^3) \le O(nm\log n)$ time. 

We can now ignore the entries of $x'$ that are already $0$ or $1$ and the corresponding columns of $A$. We solve the resulting derandomization problem consisting of an $m \times (m+1)$ matrix $\tilde A$, a cardinality constraint, and an $(m+1)$--dimensional vector $\tilde x$ by simply checking all at most $2^{m+1} = O(n)$ possible roundings and computing their rounding errors each in time of order $m^2 = O(m \log n)$. Again, the total time for this is $O(mn\log n)$. This procedure finds a $y$ as desired, since we know its existence 
from Theorem~\ref{trr} already.
\end{proof}

\section{The Integrality Gap}

We now show that the linear relaxation $LP_C$ has an integrality gap of at least $\log K / \log\log K$, that is, there is an instance of the Min-Max Selecting Items problem such that $LP_1$ is feasible, but any integral solution to this Min-Max Selecting Items problem has cost at least $\Omega(\log K / \log\log K)$. This indicates that LP-based approaches using this LP formulation will not easily give  approximation ratios asymptotically better than $\Theta(\log K / \log\log K)$.

Let $k$ be an arbitrary integer. Let $p \ge k$ and $n \ge k^2 + (p-k)$. For each $T \in \binom{[k^2]}{k} := \{T \subseteq [k^2] \mid |T| = k\}$, define a scenario $S_T$ by $c_{S_T,i} = 1$, if $i \in T$, $c_{S_T,i} = 0$, if $i \in [k^2 + (p-k)] \setminus T$, and $c_{S_T,i} = 2$ otherwise. Let $x \in [0,1]^n$ be defined by $x_i = 1/k$ for $i \in [k^2]$, $x_i = 1$ for $i \in [k^2+1..k^2+(p-k)]$, and $x_i = 0$ otherwise. Then $\sum_{i \in [n]} x_i = p$ and $\sum_{i \in [n]} c_{S_T,i} x_i = 1$ for all $T \in \binom{[k^2]}{k}$. Hence $LP_1$ is feasible. 

Now let $P$ be an optimal (integral) solution to this problem instance. Since items in $[k^2+(p-k)+1..n]$ have cost 2 in all scenarios, whereas those in $[k^2+(p-k)]$ have cost at most 1 in all scenarios (and these are at least $p = |P|$ items), the optimality of $P$ implies $P \subseteq [k^2 + (p-k)]$. Since $|P| = p$, we have $|P \cap [k^2]| \ge k$. Hence there is a $T \in \binom{[k^2]}{k}$ such that $T \subseteq P$. Consequently, $c(P,S_T) \ge |P \cap T| = k$. This shows that the integrality gap of this instance is at least $k$.

It remains to show that $k = \Omega(\log K / \log\log K)$ for the number $K = |\binom{[k^2]}{k}|$ of scenarios. By Stirling's formula, we compute $\log(K) =  \Theta(\log\binom{k^2}{k}) = \Theta(k \log k)$. Consequently, $\log K / \log\log K = \Theta(k \log k / \log(k \log k)) = \Theta(k)$.  
 
\bibliographystyle{alpha}
\bibliography{references}

\begin{thebibliography}{KKZ13}

\bibitem[Doe05]{ichstacs05}
B.~Doerr.
\newblock Roundings respecting hard constraints.
\newblock In V.~Diekert and B.~Durand, editors, {\em Proc.\ 22nd Ann.\ Symp.\
  on Theoretical Aspects of Computer Science (STACS 2005)}, volume 3404 of {\em
  Lecture Notes in Comput.\ Sci.}, pages 617--628. Springer-Verlag, 2005.

\bibitem[Doe06]{ichstacs06}
B.~Doerr.
\newblock Generating randomized roundings with cardinality constraints and
  derandomizations.
\newblock In B.~Durand and W.~Thomas, editors, {\em Proc.\ 23rd Ann.\ Symp.\ on
  Theoretical Aspects of Computer Science (STACS 2006)}, volume 3884 of {\em
  Lecture Notes in Comput.\ Sci.}, pages 571--583. Springer, 2006.

\bibitem[DW09]{DoerrW09}
B.~Doerr and M.~Wahlstr{\"o}m.
\newblock Randomized rounding in the presence of a cardinality constraint.
\newblock In {\em Proc.\ Workshop on Algorithm Engineering and Experiments
  (ALENEX 2009)}, pages 162--174. SIAM, 2009.

\bibitem[KKZ13]{KasperskiKZ13}
A.~Kasperski, A.~Kurpisz, and P.~Zieli\'nski.
\newblock Approximating the min-max (regret) selecting items problem.
\newblock {\em Inf. Process. Lett.}, 113:23--29, 2013.

\bibitem[KY97]{KouvelisY97}
P.~Kouvelis and G.~Yu.
\newblock {\em Robust Discrete Optimization and Its Applications}.
\newblock Springer, 1997.

\bibitem[KZ09]{KasperskiZ09}
A.~Kasperski and P.~Zieli\'nski.
\newblock A randomized algorithm for the min-max selecting items problem with
  uncertain weights.
\newblock {\em Annals OR}, 172:221--230, 2009.

\bibitem[Rag88]{Raghavan88}
P.~Raghavan.
\newblock Probabilistic construction of deterministic algorithms: Approximating
  packing integer programs.
\newblock {\em J.~Comput. Syst. Sci.}, 37:130--143, 1988.

\bibitem[Sri01]{Srinivasan01}
A.~Srinivasan.
\newblock Distributions on level-sets with applications to approximations
  algorithms.
\newblock In {\em Proc. 42nd Ann. IEEE Symp. on Foundations of Computer Science
  (FOCS)}, pages 588--597, 2001.

\bibitem[SS96]{SrivastavS96}
A.~Srivastav and P.~Stangier.
\newblock Algorithmic {C}hernoff--{H}oeffding inequalities in integer
  programming.
\newblock {\em Random Structures Algorithms}, 8:27--58, 1996.

\end{thebibliography}

\end{document}